\documentclass[12pt]{article}
\usepackage{amsmath,amssymb,amsthm,mathtools}
\usepackage{graphicx,psfrag,epsf}
\usepackage{enumerate}
\usepackage{natbib}
\usepackage{hyperref}
\usepackage{url}
\usepackage{float}
\usepackage{listings}
\usepackage{subcaption}
\usepackage[T1]{fontenc}
\newcommand{\bs}[1]{\boldsymbol{#1}}
\newcommand{\LR}{\textsc{LR}}
\newtheorem{assumption}{Assumption}
\newtheorem{theorem}{Theorem}
\newtheorem{corollary}{Corollary}
\newtheorem{remark}{Remark}
\theoremstyle{remark}

\newcommand{\blind}{0}

\begin{document}

\bibliographystyle{plainnat}

\def\spacingset#1{\renewcommand{\baselinestretch}%
{#1}\small\normalsize} \spacingset{1}


\if0\blind
{
  \title{\bf Randomization inference for treatment effects on survival outcomes}
  \author{Lucy D'Agostino McGowan\thanks{
   \textbf{Funding/Support} Research reported in this publication was funded through a Patient-Centered Outcomes Research Institute (PCORI) Award (ME-2023C2-33433). \textbf{Role of Sponsor} The statements in this publication are solely the responsibility of the authors and do not necessarily represent the views of the Patient-Centered Outcomes Research Institute (PCORI), its Board of Governors or Methodology Committee.}\hspace{.2cm}\footnote{mcgowald@wfu.edu}\\
    Department of Statistical Sciences, Wake Forest University\\
     \\
    Joseph Rigdon\footnotemark[1] \\ 
    Department of Biostatistics and Data Science,\\ Wake Forest University School of Medicine \\
    \\
    Xinran Li \\
    Department of Statistics, University of Chicago \\
     \\
    Dylan Small \\
    Department of Statistics and Data Science,\\  The Wharton School, University of Pennsylvania \\
   }
  \maketitle
  \newpage
} \fi

\if1\blind
{
  \bigskip
  \bigskip
  \bigskip
  \begin{center}
    {\bf Randomization inference for treatment effects on survival outcomes}
\end{center}
  \medskip
} \fi

\begin{abstract}
The log-rank test and Kaplan--Meier plot are standard tools for analyzing time-to-event data in randomized clinical trials, yet neither provides a summary of the magnitude of the treatment effect. Practitioners typically fill this gap by reporting a hazard ratio from a Cox proportional-hazards model or an acceleration factor from an accelerated failure time (AFT) model, but both require assumptions beyond those needed for the log-rank test or Kaplan--Meier estimator. We propose two nonparametric confidence intervals for scalar effect-size summaries, an additive shift $c$ and a multiplicative factor $\rho$, obtained by inverting the log-rank test under sharp null hypotheses of constant treatment effects. Building on the randomization-inference framework of \citet{li2023randomization}, both intervals are valid under the randomization distribution alone, requiring no assumptions for the event-time distribution. We evaluate the proposed multiplicative interval via simulation, finding that it maintains nominal coverage across a range of censoring rates and sample sizes, including under data-generating processes that misspecify a parametric AFT model, while incurring only a modest efficiency loss compared to parametric AFT inference under correct specification. We illustrate the approach using data from a randomized trial of rhDNase for cystic fibrosis and provide \textsf{R} code and a Shiny application for ease of implementation.
\end{abstract}

\noindent%
{\it Keywords:}  survival analysis, log-rank test, randomization inference, time-to-event, accelerated failure time
\vfill

\newpage
\spacingset{1.45} 
\section{Introduction}
\label{sec:intro}

The Kaplan--Meier plot and the log-rank test are common tools used to analyze time-to-event data in randomized clinical trials. The Kaplan--Meier plot conveys the full shape of each arm's survival curve, while the log-rank test provides a $p$-value for the null hypothesis of no treatment effect. Neither, however, directly quantify the overall magnitude of the treatment effect, an important quantity when trying to establish both statistical and clinical significance. Practitioners typically fill this gap by reporting a hazard ratio from a Cox proportional-hazards model \citep{cox1972regression}. The Cox hazard ratio is well understood, but it relies on the proportional-hazards assumption, which may be violated in practice and which is not required for the validity of either the Kaplan--Meier estimator or the log-rank test. Accelerated failure time (AFT) models \citep{wei1992accelerated} offer an alternative effect-size summary, but likewise require assumptions about the data-generating process.

A potential alternative is to report effect-size summaries that, like the log-rank test itself, require no parametric model and are valid under the randomization distribution alone. \citet{li2023randomization} established a finite-population randomization-inference foundation for the log-rank test under noninformative censoring, showing that the log-rank statistic is asymptotically standard normal conditional on the potential outcomes under the sharp null of no treatment effect. Their framework accommodates Bernoulli randomized designs and imposes only mild regularity conditions on the joint distribution of event and censoring times. Building on \citet{li2023randomization}, we observe that their result extends to a family of sharp null hypotheses of constant treatment effects, both additive and multiplicative. Typically, randomization-based inference for a sharp null hypothesis involves transforming the potential outcomes \citep{rosenbaum2020design}; we extend this transformation to the potential censoring times as well, which ensures that the realized time at risk and event indicator are fully determined under the null. Inverting the resulting tests yields confidence intervals (CIs) for two complementary scalar summaries of the treatment effect. The first is an additive shift parameter, $c$. Under the null, $H_c$, treatment adds $c$ time units to every unit's event time. The CI for $c$ is expressed on the original time scale and answers the question ``by how much did treatment delay the event?'' The second is a multiplicative factor, $\rho$. Under the null, $H_\rho$, treatment multiplies every unit's event time by $\rho$. This parameter targets the same estimand as the acceleration factor in AFT models, but unlike model-based AFT inference, its validity derives from the randomization distribution alone, requiring no assumption on the event time distribution or the censoring beyond that censoring times are independent and identically distributed. Both CIs are computed by evaluating the log-rank $p$-value over a fine grid of candidate parameter values and retaining those not rejected at the $\alpha$-level. 

The remainder of the paper is organized as follows. Section~\ref{sec:setup} introduces the potential outcomes setup and assumptions from \citet{li2023randomization}. Section~\ref{sec:tests} derives the tests for sharp nulls of additive and multiplicative constant effects. Section~\ref{sec:ci} presents the CI construction and discusses computation and interpretation. Section~\ref{sec:sims} includes simulation results demonstrating the operating characteristics in comparison to parametric model-based methods. Section~\ref{sec:example} applies the method to real trial data and provides \textsf{R} code for ease of implementation. Finally, Section~\ref{sec:discussion} concludes with discussion of extensions and limitations.

\section{Setup, Notation and Assumptions}
\label{sec:setup}

We first introduce the setup, notation and assumptions from \citet{li2023randomization}. We consider an experiment on $n$ units with two treatment arms. 
For each unit $i$, let $T_i(1)$ and $T_i(0)$ denote the potential event times under treatment and control, and $C_i(1)$ and $C_i(0)$ denote the potential censoring times under treatment and control, respectively.
Let $\bs{T}(1)=\bigl(T_1(1),\ldots,T_n(1)\bigr)^\top$, $\bs{T}(0)=\bigl(T_1(0),\ldots,T_n(0)\bigr)^\top$, $\bs{C}(1)=\bigl(C_1(1),\ldots,C_n(1)\bigr)^\top$, and $\bs{C}(0)=\bigl(C_1(0),\ldots,C_n(0)\bigr)^\top$.

Let $Z_i\in\{0,1\}$ be the treatment assignment for unit $i$, where $Z_i=1$ if unit $i$ receives the active treatment and $Z_i=0$ otherwise. 
The observed  time at risk and the observed event indicator is 
\begin{align*}
	W_i = 
	\begin{cases}
		\min\{T_i(1), C_i(1)\}, & \text{if } Z_i = 1,\\
		\min\{T_i(0), C_i(0)\}, & \text{if } Z_i = 0, 
	\end{cases}
	\text{ and } 
	\Delta_i = 
	\begin{cases}
		1\{ T_i(1) \le C_i(1) \}, & \text{if } Z_i = 1,\\
		1\{ T_i(0) \le C_i(0) \}, & \text{if } Z_i = 0. 
	\end{cases}
\end{align*}

\begin{assumption}
$
Z_i \mid \bigl( \bs{T}(1),\bs{T}(0),\bs{C}(1),\bs{C}(0)\bigr) \overset{\mathrm{i.i.d.}}{\sim} \mathrm{Bern}(p_1)$, $1\le i\le n,$
for some $p_1\in(0,1)$.
\end{assumption}
In other words, conditional on all potential event and censoring times, treatment assignments are i.i.d. across units, each with probability $p_1$. This formalizes randomization, meaning treatment assignment is independent of both potential outcomes and potential censoring times.

Second, we consider the distribution of $(\bs{C}(1),\bs{C}(0))$, also called the censoring mechanism.
We focus on the case of noninformative i.i.d.\ censoring.

\begin{assumption}
\label{a2}
The potential censoring times for all units are independent of the potential event times for all units, that is, $(\bs{C}(1),\bs{C}(0)) \perp\!\!\!\perp (\bs{T}(1),\bs{T}(0))$, and the $n$ two-dimensional random vectors
\[
\bigl(C_1(1),C_1(0)\bigr),\ldots,\bigl(C_n(1),C_n(0)\bigr)
\]
are i.i.d.
\end{assumption}
This holds, for example, when censoring is due to administrative end of study or when subjects withdraw from the study for reasons unrelated to their outcome status. It would be violated if, for instance, subjects in worse health were more likely to drop out.

\section{Tests for the sharp null of constant effects}
\label{sec:tests}

\subsection{Additive shift}

Consider the null hypothesis
\[
H_c:\ T_i(1)=T_i(0)+c
\]
for some $c\ge 0$.

\begin{remark}
\label{remark:add}
	We can also consider the sharp null with constant effect $c<0$, by switching the treatment and control labels.
\end{remark}

Let
$
W_{i,c}=W_i+c( 1-Z_i )
$
for $ 1\le i\le n,$
and let $\LR_c$ denote the logrank statistic computed using the data
$
\{(Z_i,\Delta_i,W_{i,c}):1\le i\le n\}.
$

Define, for $1\le i\le n$, 
\[
\tilde T_i(1)=T_i(1),\qquad \tilde T_i(0)=T_i(0)+c,\qquad
\tilde C_i(1)=C_i(1),\qquad \tilde C_i(0)=C_i(0)+c,
\]

\begin{theorem}
\label{thm:additive}
Under the null hypothesis $H_c$ for any $c\ge 0$, if Assumptions 1 and 2 hold, and the regularity conditions (i.e., Condition 1 or 2 in \citet{li2023randomization}) hold for the transformed potential event times
$
\{(\tilde T_i(1),\tilde T_i(0))\}_{i=1}^n
$
and the transformed potential censoring times
$
\{(\tilde C_i(1),\tilde C_i(0))\}_{i=1}^n,
$
then
\[
\LR_c \mid \bs{T}(1),\bs{T}(0) \xrightarrow{d} N(0,1).
\]
\end{theorem}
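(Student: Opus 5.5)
The plan is to reduce Theorem~\ref{thm:additive} to the main result of \citet{li2023randomization}, namely that the log-rank statistic is asymptotically $N(0,1)$ conditional on the potential event times under the sharp null of \emph{no} effect, by showing that the transformed data $\{(Z_i,\Delta_i,W_{i,c})\}$ are exactly the observed data of a fictitious experiment with potential outcomes $(\tilde T_i(1),\tilde T_i(0))$ and censoring times $(\tilde C_i(1),\tilde C_i(0))$. In that experiment the sharp null of no effect holds, and Assumptions 1 and 2 also hold.

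\textbf{Step 1: the transformed data are the observed data of the fictitious experiment.} Let $\tilde W_i$ and $\tilde\Delta_i$ denote the observed time at risk and event indicator generated from $(\tilde T_i(z),\tilde C_i(z))$ and $Z_i$.
\begin{itemize}
\item If $Z_i=1$, then $\tilde W_i=\min\{T_i(1),C_i(1)\}=W_i=W_{i,c}$ and $\tilde\Delta_i=\Delta_i$.
\item If $Z_i=0$, then $\tilde W_i=\min\{T_i(0)+c,\,C_i(0)+c\}=W_i+c=W_{i,c}$. The event indicator is unchanged because $1\{T_i(0)+c\le C_i(0)+c\}=1\{T_i(0)\le C_i(0)\}=\Delta_i$.
\end{itemize}
This is exactly why the censoring times must be shifted along with the event times. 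Consequently $\LR_c$ equals the log-rank statistic computed on the fictitious experiment's observed data.

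\textbf{Step 2: the fictitious experiment satisfies the sharp null and the assumptions.} Under $H_c$ we have $\tilde T_i(1)=T_i(0)+c=\tilde T_i(0)$ for all $i$, so the sharp null of no effect holds. For Assumption 1, $\tilde{\bs T},\tilde{\bs C}$ are deterministic, measurable functions of $(\bs T,\bs C)$ since $c$ is a fixed constant. Hence conditioning on the tilde quantities is implied by conditioning on the originals, and the $Z_i$ remain i.i.d.\ $\mathrm{Bern}(p_1)$ given them. For Assumption 2, the map $(\bs C(1),\bs C(0))\mapsto(\tilde{\bs C}(1),\tilde{\bs C}(0))$ acts coordinatewise by the same fixed function. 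This preserves independence from $(\tilde{\bs T}(1),\tilde{\bs T}(0))$, which is itself a function of $(\bs T(1),\bs T(0))$, and it preserves the i.i.d.\ property of the pairs. The regularity conditions are assumed directly for the tilde quantities.

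\textbf{Step 3: apply the result and transfer the conditioning.} \citet{li2023randomization} now gives $\LR_c\mid\tilde{\bs T}(1),\tilde{\bs T}(0)\xrightarrow{d}N(0,1)$. Because $(\bs T(1),\bs T(0))\mapsto(\tilde{\bs T}(1),\tilde{\bs T}(0))$ is a bijection for fixed $c$, conditioning on either set of event times is the same σ-algebra, which yields the stated conclusion.

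The main obstacle is Step 2, specifically checking that Li et al.'s conditional statement is stated under exactly Assumptions 1 and 2 and the tilde regularity conditions, with no hidden requirement. One example would be a positivity or nonnegativity condition on the censoring distribution that interacts with the shift $c\ge0$. If their theorem conditions on event times and averages over both $Z$ and censoring, the bijection argument above handles the conditioning. Otherwise I would check that their proof uses only the i.i.d.\ structure of the censoring pairs, which the shift preserves.
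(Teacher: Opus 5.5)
Your proposal is correct and follows the same route as the paper's proof. Both arguments check that Assumptions 1 and 2 survive shifting both the event and censoring times, note that $H_c$ becomes Fisher's sharp null $\tilde T_i(1)=\tilde T_i(0)$, verify that the induced observed data are exactly $(Z_i,\Delta_i,W_{i,c})$, and then invoke Theorems 4 and 5 of \citet{li2023randomization}; your explicit case check and your remark that conditioning on $(\tilde{\bs T}(1),\tilde{\bs T}(0))$ is equivalent to conditioning on $(\bs T(1),\bs T(0))$ (a bijection for fixed $c$) only spell out details the paper leaves implicit.
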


\begin{proof}
First, we can verify that Assumptions 1 and 2 still hold after replacing the original potential event and censoring times by the  transformed potential event and censoring times. 

Second, under $H_c$, Fisher's null of no treatment effect holds for the transformed potential event times. 
That is, $\tilde T_i(1)=\tilde T_i(0)$ for $1\le i\le n$.

Third, with transformed potential event
and censoring times, 
for each unit $i$, the corresponding observed time at risk becomes $W_{i,c}$, and the observed event indicator is still $\Delta_i$.

From the above, the theorem follows from Theorems 4 and 5 in \citet{li2023randomization} applied to the transformed potential event and censoring times.
\end{proof}

\begin{remark} 
Standard randomization inference transforms potential outcomes under a sharp null. We extend this by also transforming potential censoring times. This ensures the realized time at risk and event indicator are fully determined under the null, as the observed value $\min\{T_i(z), C_i(z)\}$ depends on both. 
\end{remark}

\subsection{Multiplicative factor}
\label{sec:mult}

A natural alternative to the additive shift is to suppose that the treatment scales every potential event time by a common factor.  Define the sharp
null hypothesis
\begin{equation}
  H_\rho:\ T_i(1)=T_i(0)\rho,\quad 1\le i\le n,
  \label{eq:Hrho}
\end{equation}
for some $\rho>0$.  When $\rho>1$ the treatment prolongs survival (event times are longer under treatment); when $\rho<1$ it shortens survival.

Under $H_\rho$ we rescale the control-arm observed times by $\rho$. Specifically, let
\[
  W_{i,\rho} =
  \begin{cases}
    W_i, & Z_i=1,\\
    W_i\rho, & Z_i=0,
  \end{cases}
\]

and let $\LR_\rho$ denote the log-rank statistic computed from 
$\{(Z_i,\Delta_i,W_{i,\rho}):1\le i\le n\}$. Define, for $1\le i\le n$,
\[
  \tilde T_i(1)=T_i(1),\qquad \tilde T_i(0)=T_i(0)\rho,\qquad
  \tilde C_i(1)=C_i(1),\qquad \tilde C_i(0)=C_i(0)\rho.
\]

\begin{theorem}
\label{thm:multiplicative}
Under $H_\rho$ for any $\rho>0$, if Assumptions~1 and~2 hold and the regularity conditions of \citet{li2023randomization} hold for the transformed potential event times $\{(\tilde T_i(1),\tilde T_i(0))\}_{i=1}^n$ and the transformed potential censoring times $\{(\tilde C_i(1),\tilde C_i(0))\}_{i=1}^n$, then
\[
  \LR_\rho \mid \bs{T}(1),\bs{T}(0) \xrightarrow{d} N(0,1).
\]
\end{theorem}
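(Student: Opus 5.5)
The plan is to mirror the proof of Theorem~\ref{thm:additive}. We reduce $H_\rho$ to Fisher's sharp null of no effect for the transformed potential event and censoring times. We then invoke Theorems 4 and 5 of \citet{li2023randomization} on the transformed system. Four checks are needed, in order.

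\textbf{Step 1: the assumptions survive the transformation.} The transformed quantities $(\tilde{\bs T}(1),\tilde{\bs T}(0),\tilde{\bs C}(1),\tilde{\bs C}(0))$ are a deterministic, coordinatewise function of the original potential outcomes, since $\rho>0$ is a fixed known constant. Hence conditioning on the transformed quantities is no finer than conditioning on the originals. Assumption~1 then gives $Z_i$ i.i.d.\ $\mathrm{Bern}(p_1)$ conditional on the transformed vectors.

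For Assumption~\ref{a2}, note that $(\tilde{\bs C}(1),\tilde{\bs C}(0))$ is a measurable function of $(\bs C(1),\bs C(0))$ alone. Likewise $(\tilde{\bs T}(1),\tilde{\bs T}(0))$ is a measurable function of $(\bs T(1),\bs T(0))$ alone. So independence is preserved. The pairs $(C_i(1),\rho C_i(0))$ are obtained by applying the same fixed map to i.i.d.\ pairs, so they remain i.i.d.

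\textbf{Step 2: the null becomes Fisher's sharp null.} Under $H_\rho$ we have
\[
\tilde T_i(1)=T_i(1)=\rho\,T_i(0)=\tilde T_i(0)
\]
for every $i$.

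\textbf{Step 3: the observed data match the transformed system.} Because $\rho>0$, multiplication by $\rho$ is strictly increasing. Therefore
\[
\min\{\rho T_i(0),\rho C_i(0)\}=\rho\min\{T_i(0),C_i(0)\}
\quad\text{and}\quad
1\{\rho T_i(0)\le \rho C_i(0)\}=1\{T_i(0)\le C_i(0)\}.
\]
So for $Z_i=0$, the observed time at risk in the transformed system is $\rho W_i=W_{i,\rho}$ and the event indicator is still $\Delta_i$. For $Z_i=1$ nothing changes. Hence $\LR_\rho$ is exactly the log-rank statistic that \citet{li2023randomization} analyze, computed on the transformed experiment.

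\textbf{Step 4: conditioning and conclusion.} Theorems 4 and 5 of \citet{li2023randomization} apply under the assumed regularity conditions for the transformed times. They give asymptotic normality conditional on the transformed potential event times. These generate the same $\sigma$-field as $(\bs T(1),\bs T(0))$, because the map is invertible for $\rho>0$. This yields $\LR_\rho\mid \bs T(1),\bs T(0)\xrightarrow{d}N(0,1)$.

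The main obstacle is Step 3, and it is the only place where positivity of $\rho$ is essential. It is also where the choice to rescale the censoring times, and not only the event times, is needed: otherwise the event indicator would not be invariant. Otherwise the argument is a direct reduction, and I expect no further difficulty.
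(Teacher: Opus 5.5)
Your proposal is correct and follows the paper's proof step for step. Like the paper, you verify Assumptions~1 and~2 for the transformed times, note that $H_\rho$ becomes Fisher's sharp null $\tilde T_i(1)=\tilde T_i(0)$, check that the transformed system's observed data are $(W_{i,\rho},\Delta_i)$, and invoke Theorems~4 and~5 of \citet{li2023randomization}; your additional details (monotonicity of $t\mapsto\rho t$ for the minimum and the indicator, and equality of the conditioning $\sigma$-fields) are exactly the justifications the paper leaves implicit.
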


\begin{proof}
The proof mirrors that of Theorem~\ref{thm:additive}.

First, Assumptions~1 and~2 still hold after replacing the original potential event and censoring times by the transformed potential event and censoring times. 

Second, under $H_\rho$, Fisher's null of no treatment effect holds for the transformed potential event times. That is, $\tilde T_i(1)=\tilde T_i(0)$ for $1\le i\le n$. 

Third, with the transformed potential event and censoring times, for each unit $i$, the corresponding observed time at risk becomes $W_{i,\rho}$, and the observed event indicator is still $\Delta_i$. 

From the above, the theorem follows from Theorems~4 and~5 of \citet{li2023randomization} applied to the transformed potential event and censoring times.
\end{proof}

\begin{remark}
\label{remark:mult}
The parameter $\rho$ in $H_\rho$ is a distribution-free time-acceleration factor. $\rho > 1$ means treatment multiplies every unit's event time by $\rho$, prolonging survival, while $\rho < 1$ shortens it. Unlike estimates from parametric AFT models, $\rho$ requires no distributional assumption and is valid under randomization alone. 
\end{remark}

\section{Construction}
\label{sec:ci}
An $\alpha$-level test of $H_c$ (and $H_\rho$, respectively) is obtained by rejecting when $|\LR_c|>z_{\alpha/2}$ (or $|\LR_\rho|>z_{\alpha/2}$), where $z_{\alpha/2}$ is the upper $\alpha/2$ quantile of a standard normal distribution. By test inversion, a $(1-\alpha)$ confidence set for the true additive shift $c^*$ is
\[
  \mathcal{C}_{\mathrm{add}} = \bigl\{c: |\LR_c| \le z_{\alpha/2}\bigr\},
\]
where for $c < 0$, $\LR_c$ is computed after switching treatment and control labels, as discussed in Remark~\ref{remark:add}. A $(1-\alpha)$ confidence set for the true multiplicative factor $\rho^*$ is
\[
  \mathcal{C}_{\mathrm{mult}} = \bigl\{\rho > 0 : |\LR_\rho| \le z_{\alpha/2}\bigr\}.
\]

\begin{corollary}
\label{cor:coverage}
Under the assumptions of Theorems~\ref{thm:additive} and~\ref{thm:multiplicative}, the confidence sets $\mathcal{C}_{\mathrm{add}}$ and $\mathcal{C}_{\mathrm{mult}}$ have asymptotic coverage at least $1-\alpha$.
\end{corollary}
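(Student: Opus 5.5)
The argument is the standard duality between tests and confidence sets, combined with the conditional asymptotic normality in Theorems~\ref{thm:additive} and~\ref{thm:multiplicative}. I treat $\mathcal{C}_{\mathrm{mult}}$ first, since it is the cleaner case. Fix the true factor $\rho^*$ and suppose $H_{\rho^*}$ holds, so that $T_i(1)=T_i(0)\rho^*$ for every $i$. Because the set is defined by the inequality $|\LR_\rho|\le z_{\alpha/2}$, coverage is exactly the event $\{\rho^*\in\mathcal{C}_{\mathrm{mult}}\}=\{|\LR_{\rho^*}|\le z_{\alpha/2}\}$. Here $\LR_{\rho^*}$ is computed from the observed data $(Z_i,\Delta_i,W_{i,\rho^*})$, which does not depend on knowing $\rho^*$ beyond plugging it in. By Theorem~\ref{thm:multiplicative}, conditional on $\bs{T}(1),\bs{T}(0)$, we have $\LR_{\rho^*}\xrightarrow{d}N(0,1)$. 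The standard normal is continuous, so the portmanteau theorem gives
\[
P\bigl(|\LR_{\rho^*}|\le z_{\alpha/2}\mid \bs{T}(1),\bs{T}(0)\bigr)\to 1-\alpha .
\]
Hence the conditional coverage converges to $1-\alpha$, and in particular the $\liminf$ is at least $1-\alpha$. Conditional coverage then implies unconditional coverage along any sequence of potential outcomes satisfying the regularity conditions, by averaging and dominated convergence, since the probabilities are bounded by one.

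For $\mathcal{C}_{\mathrm{add}}$ I split on the sign of the true shift $c^*$. If $c^*\ge 0$, the argument above applies verbatim using Theorem~\ref{thm:additive}. If $c^*<0$, I use Remark~\ref{remark:add}. Relabel the arms: set $Z_i'=1-Z_i$ and swap the roles of $(T_i(1),C_i(1))$ and $(T_i(0),C_i(0))$. Then $H_{c^*}$ becomes a sharp null with shift $-c^*>0$ in the relabeled experiment. I will check three things:
\begin{enumerate}[(i)]
\item Assumption~1 holds with $p_1$ replaced by $1-p_1\in(0,1)$.
\item Assumption~2 is symmetric in the two arms, so it is preserved.
\item The regularity conditions are assumed for the corresponding transformed times.
\end{enumerate}
Theorem~\ref{thm:additive} applied to the relabeled data then gives asymptotic normality of the statistic that the construction uses for $c<0$. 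The same portmanteau step finishes this case. If $0$ is handled by both conventions, either one works, since the log-rank statistic changes only in sign under relabeling and $|\cdot|$ removes the sign.

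The main obstacle is interpretive rather than technical. The guarantee is \emph{pointwise} in the true parameter: coverage is asserted only when the corresponding sharp null actually holds, with the regularity conditions imposed on the transformed potential outcomes at that true value. I would state explicitly that ``asymptotic coverage'' means $\liminf_{n\to\infty}P(\theta^*\in\mathcal{C})\ge 1-\alpha$ for each fixed true $\theta^*\in\{c^*,\rho^*\}$. No uniformity over the parameter is claimed. Likewise no claim is made that the confidence sets are intervals; the grid computation is a separate, approximate matter. The ``at least'' in the statement comes out automatically, since the limit is exactly $1-\alpha$.
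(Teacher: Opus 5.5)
Your proposal is correct and takes essentially the paper's route. The paper gives no separate proof of the corollary; it treats it as an immediate consequence of test inversion applied to Theorems~\ref{thm:additive} and~\ref{thm:multiplicative}. Your portmanteau step and relabeling check for $c^*<0$ (Bernoulli$(1-p_1)$ assignment, arm-symmetric censoring, invariance of $|\LR|$ under the sign flip) spell out what the paper leaves to Remark~\ref{remark:add}.

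The averaging/dominated-convergence remark is unnecessary. The coverage claimed is under the randomization distribution given the potential event times, so you never need to pass to unconditional coverage. Doing so would require the regularity conditions to hold almost surely under some superpopulation model.
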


\subsection{Computation}

In practice the confidence sets are computed by evaluating the log-rank $p$-value over a fine grid of candidate parameter values and retaining those for which the $p$-value exceeds $\alpha$.
For the additive shift, one searches over a grid
$c\in\{c_{\min},\ldots,c_{\max}\}$; for the multiplicative factor one searches over a grid on the log scale, $\rho\in\{\exp(\ell):\ \ell\in\{\ell_{\min},\ldots,\ell_{\max}\}\}$.

Point estimates are obtained as the value of $c$ (or $\rho$) that maximizes the log-rank $p$-value, i.e., the value most consistent with the observed data.

\subsection{Interpretation}
\label{sec:interp}

The additive shift $c$ is expressed on the original time scale and quantifies the number of time units by which treatment delays the event. The multiplicative factor $\rho$ quantifies the factor by which treatment scales event times, so that $\rho > 1$ indicates prolonged survival and $\rho < 1$ indicates shortened survival. The multiplicative factor $\rho$ is the same estimand as the acceleration factor in an AFT model, but unlike model-based AFT inference, its validity derives from the randomization distribution alone. 

Both parameters have a natural connection to the Kaplan--Meier plot. Under $H_c$, shifting the control arm's time axis right by $c$ should approximately recover the treatment arm's curve. Under $H_\rho$, stretching the control arm's time axis by $\rho$ should do the same. Overlaying the transformed control curve on the Kaplan--Meier plot provides a visual check of whether a constant-effect summary is adequate for the data at hand.

\subsubsection{Connection to the Hazard Ratio}

The hazard ratio from a Cox proportional hazards model operates on the hazard scale \citep{cox1972regression}, while the multiplicative factor $\rho$ operates on the time scale. Since the hazard ratio is so commonly reported in practice, it may be useful to draw a connection between the two. The multiplicative factor $\rho$ estimates the factor by which treatment multiplies event times, whereas the hazard ratio estimates the factor by which treatment multiplies the instantaneous event rate. While these quantities summarize different aspects of the treatment effect, a direct relationship arises under a Weibull model. Suppose event times follow a Weibull AFT model, so that under control
\[
\log(T(0))=\mu+\sigma\varepsilon,
\]
where $\varepsilon$ has an extreme value distribution, implying that $T(0)\sim \mathrm{Weibull}(\gamma = 1/\sigma, \lambda =\exp(-\mu))$ \citep{kalbfleisch2002statistical}. Under a constant multiplicative treatment effect as in \eqref{eq:Hrho}, $T(1)=T(0)\rho$. The baseline hazard is
\begin{equation}
h_0(t)=\lambda\gamma(\lambda t)^{\gamma-1},
\label{eq:h0}
\end{equation}
and under the time rescaling $T(1)=T(0)\rho$, the hazard under treatment is then
\begin{equation}
h_1(t)=\rho^{-1}h_0(t/\rho).
\label{eq:h1}
\end{equation}
Substituting the Weibull form from \eqref{eq:h0} in \eqref{eq:h1} yields
\[
h_1(t)=\rho^{-\gamma}h_0(t).
\]
Thus, the hazard ratio is $\mathrm{HR}=h_1(t)/h_0(t)=\rho^{-\gamma}=\rho^{-1/\sigma}$. In the special case where $T(0)$ follows an exponential distribution, $\gamma = 1$ and this simplifies to $\mathrm{HR} = 1/\rho$ so that the hazard ratio is simply the reciprocal of the acceleration factor. In general, the acceleration factor and hazard ratio summarize different, though related, aspects of the treatment effect.

\section{Simulations}
\label{sec:sims}

We conducted a simulation study to evaluate the finite-sample operating characteristics of the proposed randomization-based confidence intervals for the multiplicative effect parameter $\rho$, and to compare their performance with standard parametric AFT model-based inference.

We considered two data-generating processes to compare the approaches when the parametric model was correctly specified and misspecified. In the first data-generating mechanism, baseline event times were drawn from a Weibull distribution,
\[
T(0) \sim \mathrm{Weibull}(\gamma = 1.5, \lambda = 100),
\]
corresponding to a setting in which the Weibull AFT model is correctly specified. In the second mechanism, baseline event times were drawn from a log-logistic distribution,
\[
T(0) \sim \mathrm{LogLogistic}(\alpha = 4, \beta = 100),\]
which induces misspecification of the Weibull AFT likelihood.

In both cases, treatment assignment was generated as $Z \sim \mathrm{Bernoulli}(0.5)$, and a constant multiplicative treatment effect was imposed via $T(1) = T(0)\rho$. We considered two values of the treatment effect, $\rho \in \{1.0, 1.25\}$, where $\rho = 1$ corresponds to the null hypothesis of no treatment effect and $\rho = 1.25$ represents a beneficial effect that increases event times under treatment. Independent right censoring was generated from a uniform distribution on $[0, \theta]$, where $\theta$ was calibrated separately for each scenario to achieve target censoring rates of approximately 20\%, 50\%, and 80\%.

For each simulated dataset, we constructed confidence intervals using two approaches. The first is the proposed nonparametric randomization-based method, obtained by inverting the log-rank test over a fine grid of candidate values of $\rho$. The second is a parametric approach based on a Weibull AFT model fit by maximum likelihood, with Wald-type confidence intervals constructed on the log scale and exponentiated to obtain intervals for $\rho$.

We considered sample sizes $n \in \{50, 200\}$. For each combination of data-generating process, sample size, censoring rate, and true parameter value, we generated 3,000 simulated datasets. Performance under the null ($\rho = 1$) was evaluated via Type I error, defined as the proportion of intervals failing to contain the true value. Performance under the alternative ($\rho = 1.25$) was evaluated via coverage probability and median interval width.

\subsection{Results}

Under the null, the randomization-based intervals maintained Type I error close to the nominal 5\% level across both data-generating processes, sample sizes, and censoring rates (Figure~\ref{fig:type1}). The Weibull AFT intervals also controlled Type I error well under correct specification, but exhibited inflated Type I error under the log-logistic data-generating process. Under the alternative, both methods achieved comparable coverage when the Weibull model was correctly specified (Figure~\ref{fig:coverage}), with the parametric approach yielding slightly narrower intervals on average (Figure~\ref{fig:width}). Under misspecification, the Weibull AFT intervals showed degraded coverage, whereas the randomization-based intervals remained close to the nominal 95\% level across all scenarios. Overall, the results highlight a trade-off between efficiency and robustness in that parametric AFT methods offer gains in precision under correct specification, while the proposed randomization-based procedure provides reliable inference across a broader class of data-generating mechanisms.

\begin{figure}[H]
  \centering
  \includegraphics[width=\textwidth]{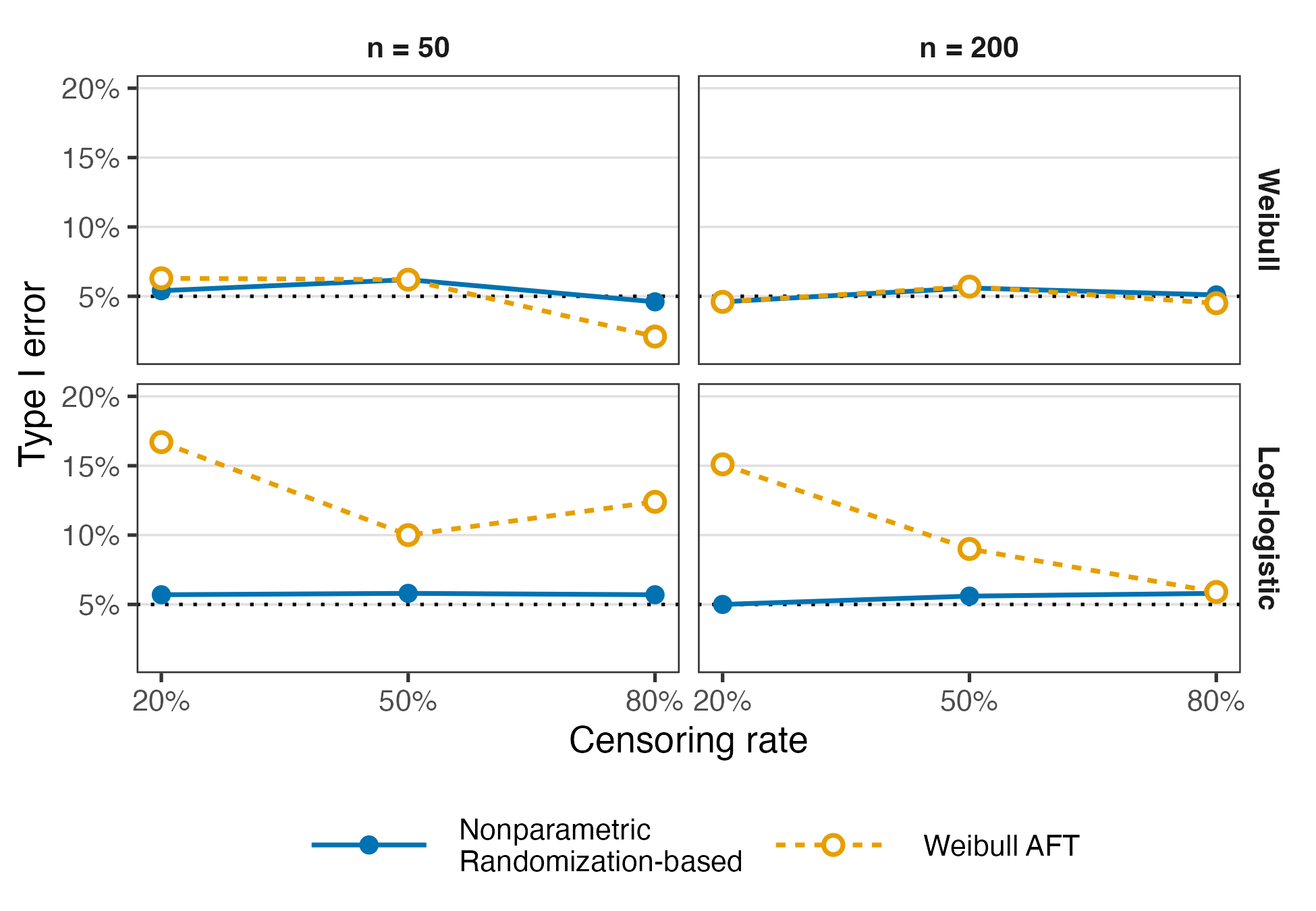}
  \caption{Type I error of 95\% confidence intervals for the randomization-based and Weibull AFT methods under the null ($\rho = 1$), across censoring rates (20\%, 50\%, 80\%), sample sizes ($n \in \{50, 200\}$), and data-generating processes based on 3,000 simulations per scenario. The dotted horizontal line marks the nominal 5\% level.}
  \label{fig:type1}
\end{figure}

\begin{figure}[H]
  \centering
  \includegraphics[width=\textwidth]{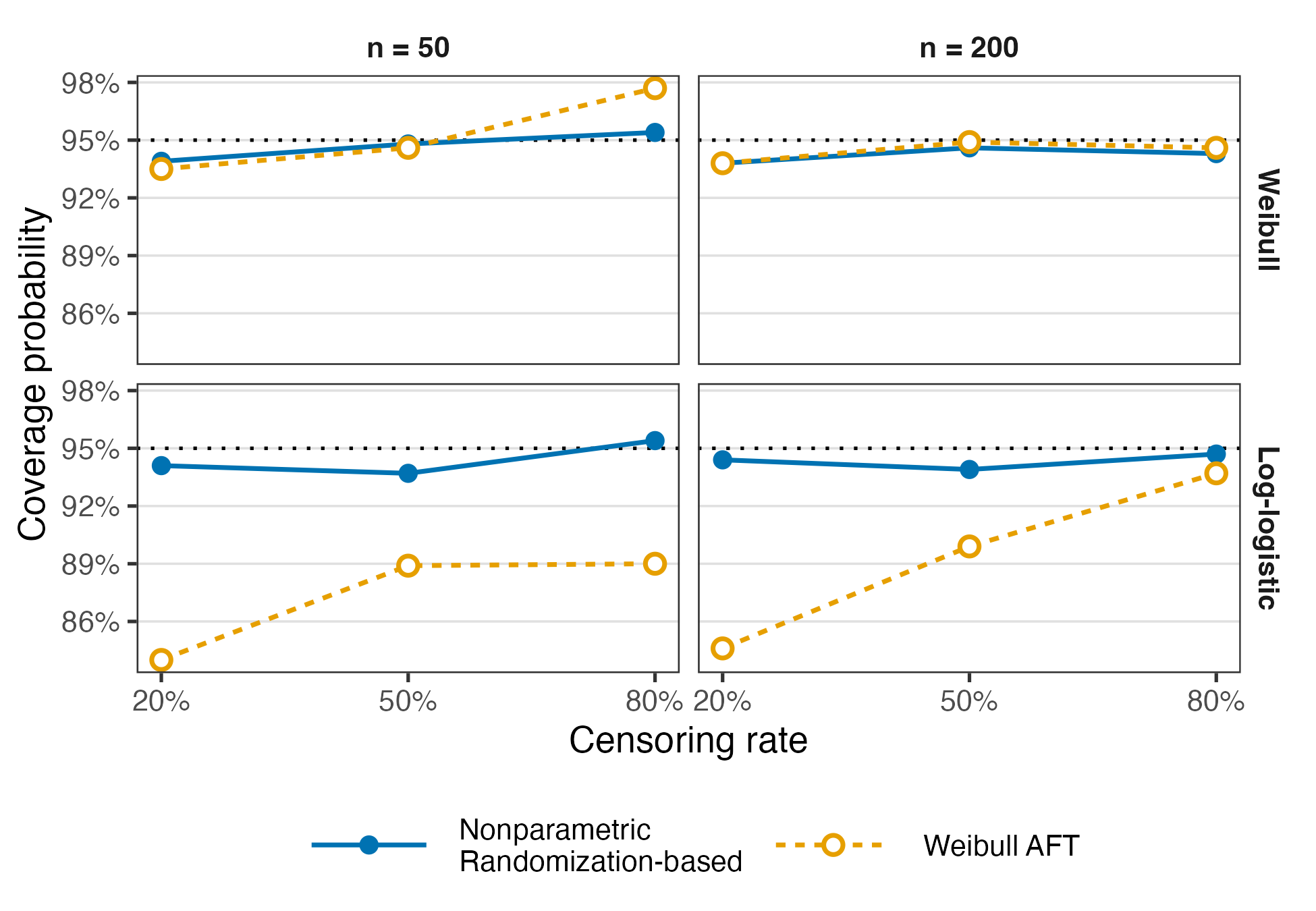}
  \caption{Empirical coverage of 95\% confidence intervals for the randomization-based and Weibull AFT methods under the alternative ($\rho = 1.25$), evaluated across censoring rates (20\%, 50\%, 80\%), sample sizes ($n \in \{50, 200\}$), and data-generating processes based on 3,000 simulations per scenario. The dotted horizontal line marks the nominal 95\% level.}
  \label{fig:coverage}
\end{figure}

\begin{figure}[H]
  \centering
  \includegraphics[width=\textwidth]{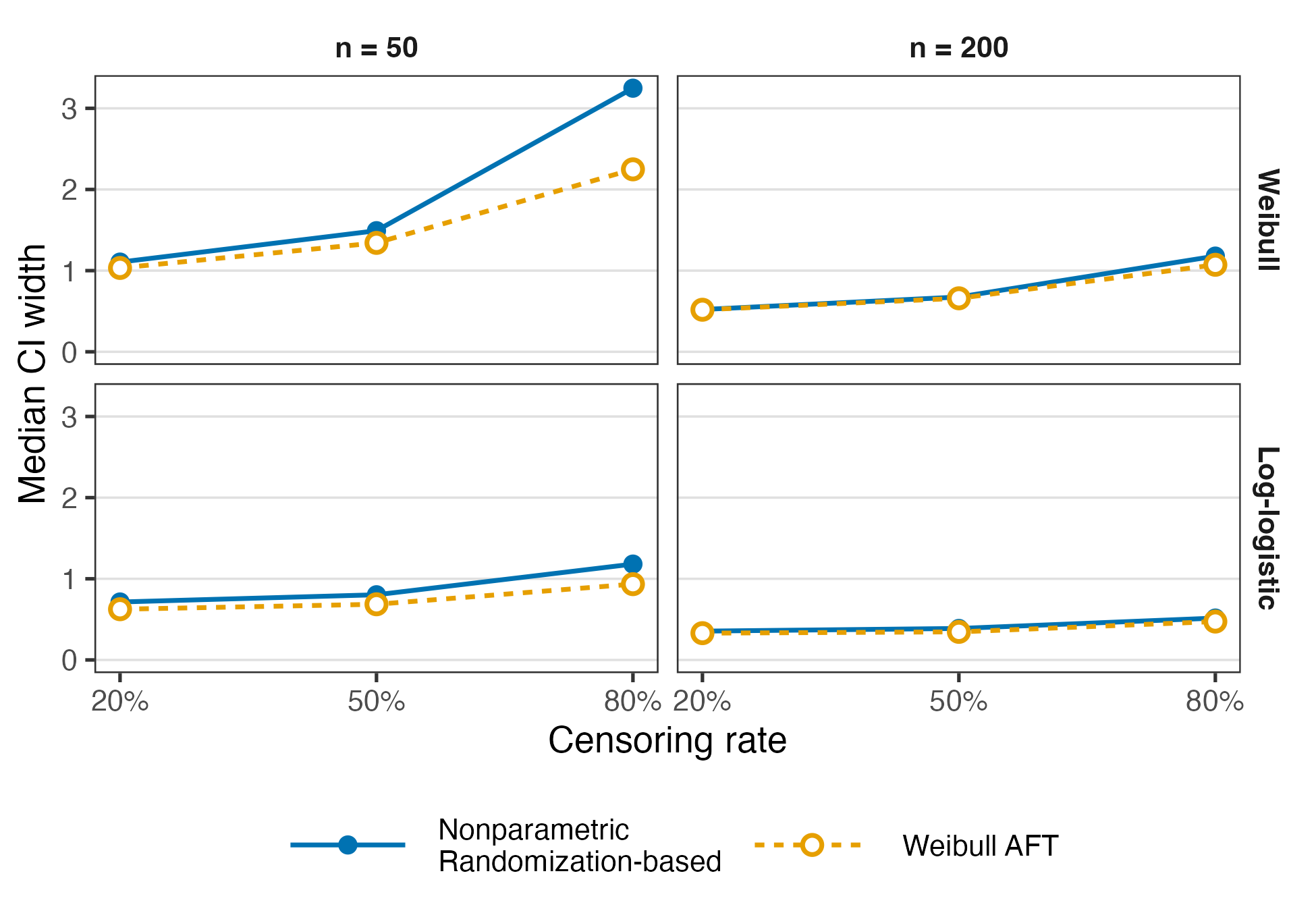}
  \caption{Median width of 95\% confidence intervals for the randomization-based and Weibull AFT methods under the alternative ($\rho = 1.25$), evaluated across censoring rates (20\%, 50\%, 80\%), sample sizes ($n \in \{50, 200\}$), and data-generating processes based on 3,000 simulations per scenario.}
  \label{fig:width}
\end{figure}

\section{Real Data Example}
\label{sec:example}

We illustrate the proposed CIs using data from a randomized trial of recombinant human deoxyribonuclease~I (rhDNase) for the treatment of cystic fibrosis \citep{therneau1997rhdnase}. Here, the primary endpoint is time to the first pulmonary exacerbation. The dataset is available as \texttt{rhDNase} in the \texttt{survival} package for \textsf{R} \citep{survival-book, survival-package}. Of the $n=647$ enrolled subjects, we use the $n=641$ with positive observed times (six subjects had an infection at enrollment and are excluded). Code~\ref{lst:setup} provides \textsf{R} code to prepare the dataset.

\begin{lstlisting}[label=lst:setup, 
                   caption={\textsf{R} code to prepare \texttt{rhDNase} dataset.}]
library(survival)
dat_first <- subset(rhDNase, !duplicated(id))
dat <- data.frame(
  Y = ifelse(!is.na(dat_first$ivstart), dat_first$ivstart,
                 as.numeric(dat_first$end.dt - dat_first$entry.dt)),
  Z = dat_first$trt,
  event = as.integer(!is.na(dat_first$ivstart))
)
dat <- dat[dat$Y > 0, ] # exclude subjects infected at enrollment
\end{lstlisting}

Of the 641 subjects, 323 were randomized to placebo ($Z=0$) and 318 to rhDNase ($Z=1$), with 138 and 103 events respectively. The log-rank test gives $p=0.0056$, providing evidence that rhDNase delays pulmonary exacerbation.

We apply the test-inversion procedure of Section~\ref{sec:ci} at the $\alpha=0.05$ level, searching over a grid of 1001 values of $c \in [-80, 80]$ days for the additive shift. The additive point estimate is $50.1$ days with 95\% CI $[17.1,\, 73.0]$ days, indicating that rhDNase delayed the time to first exacerbation by between roughly two and ten weeks, with a point estimate of about 50 days. Code~\ref{lst:add} includes an \textsc{R} function to estimate the additive effect.

\begin{lstlisting}[label=lst:add, 
                   caption={\textsf{R} function to estimate the additive treatment effect}]
ci_additive <- function(data, time, treatment, event, 
                        lower, upper, alpha = 0.05, ngrid = 1001) {
  grid_c <- seq(lower, upper, length.out = ngrid)
  
  levels   <- unique(data[[treatment]])
  trt_label  <- levels[1]
  ctrl_label <- levels[2]
  
  pval <- vapply(grid_c, function(c) {
    t <- data[[time]]
    z <- data[[treatment]]
    e <- data[[event]]
    
    # If c <= 0, flip treatment labels
    if (c <= 0) {
      z <- ifelse(z == trt_label, ctrl_label, trt_label)
      c <- -c
    } 
    
    t[z == 0] <- t[z == 0] + c
    survdiff(Surv(t, e) ~ z, rho = 0)$pvalue
  }, numeric(1))
  
  data.frame(
    est   = grid_c[which.max(pval)],
    lower = min(grid_c[pval > alpha]),
    upper = max(grid_c[pval > alpha])
  )
}
ci_additive(dat, "Y", "Z", "event", -80, 80)
#|     est lower upper
#| 1 50.08 17.12 72.96
\end{lstlisting}

To estimate the multiplicative factor, we apply the test-inversion procedure of Section~\ref{sec:ci} at the $\alpha=0.05$ level and search over a grid of 1001 values of $\rho$ on a log scale over $[e^{-3}, e^{3}]$. The multiplicative point estimate $\hat\rho = 1.55$ with 95\% CI $[1.14, 1.89]$ suggests that rhDNase increases the time to event by a factor of $1.55$. In other words, treatment is estimated to extend time to event by about 55\% on a multiplicative time scale. Code~\ref{lst:mult} includes an \textsc{R} function to estimate the multiplicative effect.

\begin{lstlisting}[label=lst:mult, 
                   caption={\textsf{R} function to estimate the multiplicative treatment effect}]
ci_multiplicative <- function(data, time, treatment, event,
                              log_lower, log_upper,
                              alpha = 0.05, ngrid = 1001) {
  grid_rho <- exp(seq(log_lower, log_upper, length.out = ngrid))
  
  pval <- vapply(grid_rho, function(rho) {
    t <- data[[time]]
    z <- data[[treatment]]
    e <- data[[event]]
    
    t[z == 0] <- t[z == 0] * rho
    survdiff(Surv(t, e) ~ z, rho = 0)$pvalue
  }, numeric(1))

  data.frame(
    est   = grid_rho[which.max(pval)],
    lower = min(grid_rho[pval > alpha]),
    upper = max(grid_rho[pval > alpha])
  )
}
ci_multiplicative(dat, "Y", "Z", "event", -3, 3)
#|     est lower upper
#| 1  1.55  1.14  1.89
\end{lstlisting}

As mentioned in Section~\ref{sec:interp}, the adequacy of a constant-effect summary can be assessed visually by transforming the control arm's Kaplan--Meier curve and checking whether it tracks the treatment arm. Figure~\ref{fig:fig-1} overlays two such transformations: the control curve shifted right by $\hat{c} = 50$ days (additive) and the control curve with its time axis stretched by $\hat\rho = 1.55$ (multiplicative), so that a step at time $t$ is replotted at $t + \hat{c}$ or $t\hat\rho$ respectively. The multiplicative rescaling tracks the treatment arm more closely across the follow-up period than the additive shift, supporting the use of $\hat\rho = 1.55$ as the preferred summary of the treatment effect. Code~\ref{lst:km} includes \textsf{R} code to create such a plot.

\begin{lstlisting}[label=lst:km, 
                   caption={\textsf{R} code to overlay the time-transformed control arm curves on the Kaplan--Meier plot.}]
library(ggsurvfit)
library(ggplot2)

# Main K-M plot
km_fit <- survfit2(Surv(Y, event) ~ Z, data = dat)

# Control arm data for transformations
km_control <- survfit2(Surv(Y, event) ~ 1, 
                       data = subset(dat, Z == 0)) |>
  tidy_survfit()

km_mult <- km_control |> transform(time = time * 1.55)
km_add  <- km_control |> transform(time = time + 50)

km_fit |>
  ggsurvfit() +
  geom_step(
    data = km_mult,
    aes(x = time, y = estimate),
    linetype = "dashed",
    inherit.aes = FALSE
  ) +
  geom_step(
    data = km_add,
    aes(x = time, y = estimate),
    linetype = "dotted",
    inherit.aes = FALSE
  ) +
  scale_ggsurvfit(x_scales = list(limits = c(0, 189)))
\end{lstlisting}

\begin{figure}
    \centering
    \includegraphics[width=0.75\linewidth]{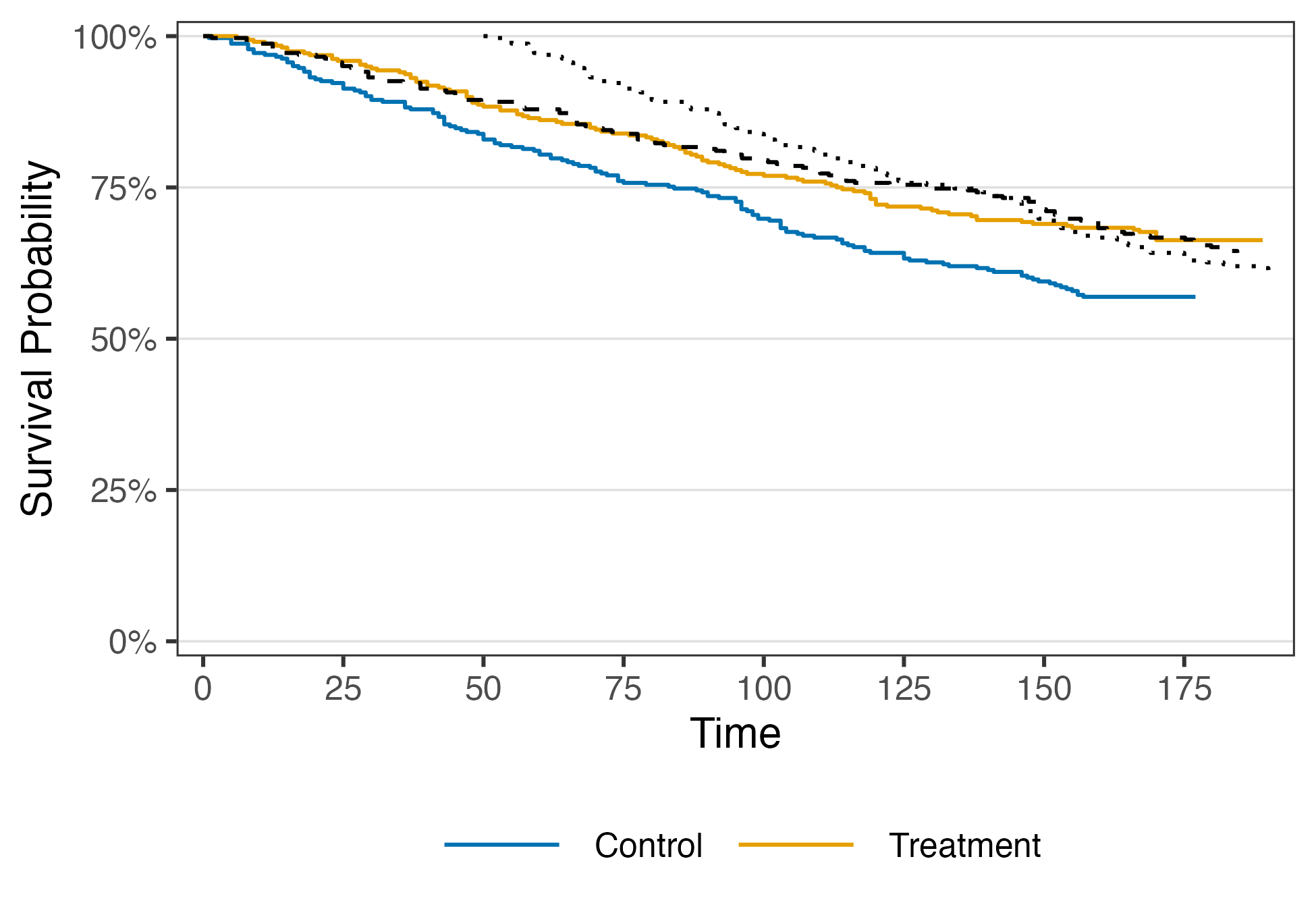}
    \caption{Kaplan--Meier survival curves for the rhDNase trial. The solid lines show the estimated survival functions for the control (blue) and rhDNase (orange) arms. The dashed line shows the control arm's curve with its time axis stretched by $\hat\rho = 1.55$ (multiplicative); the dotted line shows the control arm's curve shifted right by $\hat{c} = 50$ days (additive). The closer agreement of the multiplicative transformation with the treatment arm curve suggests that a constant time-acceleration factor is a more adequate summary of the treatment effect than a constant additive shift.}
    \label{fig:fig-1}
\end{figure}

For ease of implementation, we have created a Shiny application, available at: 
\url{https://lucy.shinyapps.io/survival_ci}.

\section{Discussion}
\label{sec:discussion}

We describe two nonparametric confidence intervals for summarizing treatment effects for randomized trials with time-to-event endpoints, one for an additive shift and one for a multiplicative factor. Both are obtained by inverting the log-rank test under sharp null hypotheses of constant effects, inheriting the randomization-validity established by \citet{li2023randomization} without requiring any model for the time-to-event distribution. 

A strength of the proposed approach is its conceptual alignment with existing nonparametric approaches used in practice. The log-rank test and Kaplan--Meier estimator are standard tools for survival analysis in randomized trials, yet neither provides a summary metric of the overall effect magnitude. The additive and multiplicative CIs proposed here are computed directly from the same log-rank statistic already often reported, require no additional distributional assumptions, and remain valid under the randomization distribution alone.

A related test-inversion approach was proposed by \citet{lin2016confidence}, who build on work by \citet{peto1977design} to construct a confidence interval for the hazard ratio by inverting the partial-likelihood score test under the Cox proportional hazards model. Their method shares the property that the confidence interval excludes the null value if the log-rank test is significant, and it yields narrower and more accurate intervals than the standard Wald confidence interval. However, their procedure targets the hazard ratio, a parameter whose interpretation relies on the proportional hazards assumption, and validity depends on correct model specification. By contrast, the confidence intervals proposed here target an additive shift or multiplicative factor and require no assumptions for the event-time distribution. While both utilize the log-rank test, the distinction is that our approach requires fewer assumptions, deriving its validity from the randomization distribution alone.

Both parameters summarize the treatment effect by a single scalar and therefore are most meaningful when the effect is constant across time. In the presence of treatment effect heterogeneity, the Kaplan--Meier curves themselves along with measures such as restricted mean survival time may be more informative. Alternatively, the constant-effect assumption could be relaxed by considering subgroup analyses, should the subgroups be known.

A limitation is that the proposed CIs will generally be wider than their parametric counterparts when a parametric model is correctly specified. Because the log-rank statistic is rank-based, it discards the magnitude of the observed times and uses only their ordering. A correctly specified Weibull or log-normal AFT model, by contrast, exploits the full likelihood and will yield a more precise estimate of the acceleration factor $\rho$ under that model. This efficiency loss is a trade-off for robustness in that the nonparametric CIs are valid regardless of the event-time distribution. Practitioners who are confident in a parametric family may prefer the narrower intervals it affords.

Although our randomization-based justification requires assumptions on the censoring mechanism (specifically, noninformative i.i.d. censoring as stated in Assumption~\ref{a2}), it places no stochastic assumptions on the potential event times. In contrast, the classical asymptotic theory for the log-rank test treats the event times as stochastic and requires independent/noninformative censoring, while allowing more general censoring schemes than i.i.d. censoring \citep{gill1980censoring}. Consequently, even in settings where the censoring assumptions underlying our randomization-based analysis fail, the proposed procedure may still admit a valid classical large-sample justification under an appropriate superpopulation model for the event times. In this sense, the procedure exhibits a form of robustness to modeling assumptions, since the randomization-based and classical asymptotic justifications rely on different sources of stochastic structure and impose assumptions on different components of the data-generating process.

In summary, the proposed confidence intervals offer a nonparametric complement to the log-rank test and Kaplan--Meier plots for summarizing time-to-event treatment effects in randomized trials. They are easy to compute using standard \textsf{R} tools, directly interpretable on clinically meaningful scales, and grounded in the same randomization-validity framework as the log-rank test itself. We hope the accompanying Shiny application makes them accessible to a broad audience of applied researchers.

\section{Data Availability}

Code to reproduce all figures and results can be found on Github: \if1\blind \textit{URL removed for blinding}\fi\if0\blind\url{https://github.com/LucyMcGowan/2026-nonparametric-survival}\fi
.
\bibliography{bibliography}
\end{document}